\numberwithin{equation}{section}
\theoremstyle{plain}
\newtheorem{theorem}{Theorem}[section]
\newtheorem{proposition}[theorem]{Proposition}
\theoremstyle{definition}
\newtheorem{definition}[theorem]{Definition}
\theoremstyle{remark}
\newtheorem{remark}[theorem]{Remark}
\newcommand{\Z}{\mathbb{Z}}
\newcommand{\R}{\mathbb{R}}
\newcommand{\C}{\mathbb{C}}
\renewcommand{\epsilon}{\varepsilon}
\newcommand{\Addresses}{{
  \bigskip
  \footnotesize

  \textsc{\address{D\'epartement de math\'ematiques et applications de l’ENS, Ecole normale sup\'erieure PSL Research University, CNRS UMR 8553, Paris. On~leave from St. Petersburg Department of Steklov Mathematical Institute RAS, Russia.  \textit{Current address:} Department of Mathematics, University of Michigan, 530 Church Street, Ann Arbor, MI 48109, USA.}}\par\nopagebreak
  \textit{E-mail address}: \texttt{dchelkak at umich.edu}

 \medskip

  \textsc{Université Paris-Saclay, CNRS, CEA, Institut de physique théorique, 91191 Gif-sur-Yvette, France}\par\nopagebreak
  \textit{E-mail address}: \texttt{sanjay.ramassamy at ipht.fr}

}}
\providecommand{\keywords}[1]{\textit{Keywords:} #1}
\title{Fluctuations in the Aztec diamonds\\ via a space-like maximal surface in Minkowski 3-space}
\author{Dmitry Chelkak}
\author{Sanjay Ramassamy}
\date{\today}
\keywords{dimer model, Aztec diamond, maximal surfaces in Minkowski 3-space, conformal invariance}
\newcommand\cG{\mathcal{G}}
\newcommand\cT{\mathcal{T}}
\newcommand\cO{\mathcal{O}}
\begin{document}

\begin{abstract} We provide a new description of the scaling limit of dimer fluctuations in homogeneous Aztec diamonds via the intrinsic conformal structure of a space-like maximal surface in the three-dimensional Minkowski space $\R^{2,1}$. This surface naturally appears as the limit of the graphs of origami maps associated to symmetric t-embeddings of Aztec diamonds, fitting the framework recently developed in~\cite{CLR2,CLR1}.
\end{abstract}

\maketitle

\section{Introduction}
\label{sec:intro}

In this paper we discuss the homogeneous bipartite dimer model on Aztec diamonds~\cite{EKLP-I}. Though this is a \emph{free fermion} model, it is still interesting due to the presence of a \emph{non-trivial conformal structure} generated by specific boundary conditions. We refer to a recent paper~\cite{ADSV} and references therein for a discussion of this conformal structure from the theoretical physics perspective; see also~\cite{GBDJ} for a similar discussion in the interacting fermions context. Mathematically, this is one of the most classical and rigorously studied examples demonstrating the rich behavior of the dimer model: convergence of fluctuations to the Gaussian Free Field (GFF) in a non-trivial metric, Airy-type asymptotics near the curve separating the phases, etc. We refer the interested reader to lecture notes~\cite{gorin-lectures,kenyon-lectures} for more information on the subject.

The Aztec diamond~$A_n$ of size~$n$ is a subset~$|p|+|q|\le n$ of the square grid~$(p,q)\in(\mathbb{Z}+\tfrac{1}{2})^2$; see Fig.~\ref{fig:A-reduction}. The dimer model on~$A_n$ is a random choice of a perfect matching in~$A_n$. Let
\[
\diamondsuit:=\{(x,y)\in\R^2:|x|+|y|<1\}\quad \text{and}\quad \mathbb{D}_\diamondsuit:=\tfrac{\sqrt{2}}{2}\mathbb{D}=\{(x,y)\in\R^2:x^2+y^2<\tfrac{1}{2}\},
\]
where~$\mathbb{D}$ is the unit disc. The \emph{Arctic circle} phenomenon~\cite{JPS-95,CEP-96} consists in the fact that the regions $\frac{1}{n}(p,q)\in \diamondsuit\smallsetminus \overline{\mathbb{D}}_\diamondsuit$ are asymptotically frozen -- the orientation of all dimers in these regions is almost deterministic -- while the liquid region~$\mathbb{D}_\diamondsuit$ carries long-range correlated fluctuations. A convenient way to encode these fluctuations is to consider random Thurston's height functions~\cite{thurston-height} of dimer covers. For homogeneous Aztec diamonds it is known~\cite{bufetov-gorin,CJY} that these height functions converge (in law) to the \emph{flat} Gaussian Free Field in~$\mathbb D$ provided that the following \emph{change of the radial variable} $\mathbb D_\diamondsuit\leftrightarrow \mathbb D$ is made:
\begin{equation}
\label{eq:rho-def}
r\,=\,{\sqrt{2}\rho}/(1\!+\!\rho^2),\quad r=(x^2\!+\!y^2)^{1/2}\in \bigl[0,\tfrac{\sqrt{2}}{2}\bigr]\ \ \leftrightarrow\ \ \rho\in [0,1].
\end{equation}

In recent developments~\cite{CLR2,CLR1}, the following idea arose: the conformal structure of dimer model fluctuations on abstract planar graphs can be understood via the so-called \emph{t-embeddings} (which appeared under the name {of} \emph{Coulomb gauges} in the preceding work~\cite{KLRR}; see also~\cite{affolter}) and the associated \emph{origami maps}; see Section~\ref{sec:discrete} for definitions. {The origami map of a t-embedding is a map from the plane to itself, so its graph is a {two-dimensional piecewise linear surface in a four-dimensional space; following~\cite{CLR2} we view the latter as the \emph{Minkowski} space $\R^{2,2}$ rather than as the Euclidean one. In the setup of this paper,} the image of the origami map becomes one-dimensional in the small mesh size limit, hence the limiting surface lives in the three-dimensional Minkowski space $\R^{2,1}\subset\R^{2,2}$.
It is shown in~\cite{CLR2} that, if the graphs of the origami maps converge to a \emph{space-like maximal surface in $\R^{2,1}$} as the mesh size tends to $0$, and provided that the technical assumptions~\textsc{Exp-Fat} and~\textsc{Lip} (see~\cite[Section~1.2.3]{CLR2}) about the non-degeneracy of the t-embeddings and origami maps are satisfied, then the intrinsic conformal metric of the limiting surface provides the right parametrization of the liquid region, i.e. the parametrization in which the fluctuations are given by a flat Gaussian Free Field.}

Recall that the three-dimensional Minkowski space $\R^{2,1}$ is equipped with a scalar product of signature $(2,1)$, which induces a positive-definite Riemannian metric on space-like surfaces. Space-like surfaces in~$\R^{2,1}$ with vanishing mean curvature are commonly known in the differential geometry literature under the name of \emph{maximal surfaces in the Minkowski space~$\R^{2,1}$} (e.g., see~\cite{kobayashi}), as they locally maximize the area.
An equivalent characterization of space-like maximal surfaces in~$\R^{2,1}$ which we will be using is the harmonicity of coordinate functions under a conformal parametrization of the surface by a complex parameter; this follows from the Weierstrass--Enneper representation of maximal surfaces \cite{kobayashi}. We stress that this conformal parameter is in general \emph{not} given by the limit of the t-embedding. The relevance of maximal surfaces with respect to the initial probabilistic model is that finding such a conformal parametrization will yield the right conformal structure to describe the GFF fluctuations. Indeed, in the limit, the dimer coupling functions become holomorphic in this conformal variable.

However, \cite{CLR2} does not provide any concrete setting where a space-like maximal surface is obtained as a limit of graphs of origami maps. The aim of this note is to provide a first example of such a setting, using the classical Aztec diamonds example. Namely, we give an inductive construction of t-embeddings and origami maps of those, and relate them to a \emph{discrete wave equation} in~2D. (It is worth noting that appearance of solutions of wave equations in the context of the dimer model on Aztec diamonds was observed in the foundational paper~\cite{CEP-96} already; see the discussion at the end of~\cite[Section~6.1]{CEP-96}.) Both analytic arguments and numerical simulations strongly support the convergence of origami maps to an explicit space-like maximal surface~$\mathrm{S}_\diamondsuit$ (see Fig.~\ref{fig:Aztec-Lorentz}), thus confirming the relevance of the setup developed in~\cite{CLR2,CLR1}. It is also worth mentioning that in~\cite[Section~4.2]{CLR2} the authors argue that such surfaces should appear in all setups in which the Gaussian structure of height fluctuations is expected.

\begin{remark} \label{rem:missing} In this short note we do \emph{not} rigorously prove the convergence of fundamental solutions of discrete wave equations to the continuous ones (see Eq.~\eqref{eq:f_0=o(1)} and~\eqref{eq:f_E-to-psi_E}). We also do not check the technical assumptions~\textsc{Exp-Fat} and~\textsc{Lip} for the t-embeddings and origamis of the Aztec diamonds. All these statements are discussed in a very recent paper~\cite{BNR}, which thus completes a new proof of the convergence of (gradients of) fluctuations in Aztec diamonds to the \emph{GFF in the conformal metric of~$S_\diamondsuit$}, following the general framework of~\cite{CLR2,CLR1}.
\end{remark}

\begin{remark} \label{rem:generalizations}
The recursive construction of the Aztec diamond via urban renewals that we use below was first introduced by Propp~\cite{propp-urban}. It provides an easy computation of the dimer partition function for the Aztec diamond with all weights equal to $1$. A stochastic variant of that construction, called domino shuffling \cite{EKLP-II}, samples a random dimer configuration of $A_{n+1}$ from a random dimer configuration of $A_n$ and additional iid Bernoulli random variables. In our analysis, the key role is played by the so-called Miquel (or central) move, which was discovered in~\cite{affolter,KLRR} to be the counterpart of the urban renewal move under the correspondence between \mbox{t-embeddings} and dimer models considered on abstract planar graphs. One may hope that other settings where generalizations of domino shuffling or urban renewal apply (e.g., see~\cite{BBBCCV,BMPW,speyer}) provide other explicit recursive constructions of t-embeddings. It would be interesting to further test the relevance of the framework of~\cite{CLR2,CLR1} on such examples, especially in presence of gaseous bubbles which, conjecturally, should lead to cusps of the limiting surface; see~\cite[Section~4.2]{CLR2}.
\end{remark}

\section{T-embeddings and origami maps of Aztec diamonds}
\label{sec:discrete}

In this section we construct the t-embeddings and origami maps of the Aztec diamonds. It is actually more convenient to work with the probabilistically equivalent \emph{reduced} Aztec diamonds.

\subsection{The Aztec diamond and its reduction}\label{sub:Aztec-comb} Let~$n\ge 1$ be a positive integer and consider the square grid $(\Z+\tfrac{1}{2})^2$ with vertices having half-integer coordinates. We define~$A_{n+1}$, the \emph{Aztec diamond} of size~$n+1$ to be a subgraph of this square grid formed by vertices~$(p,q)\in(\Z+\tfrac{1}{2})^2$ such that~$|p|+|q|\le n+1$. Independently of the parity of~$n$, we assume that the north-east ($p+q=n+1$) and south-west ($p+q=-n-1$) boundaries of~$A_{n+1}$ are composed of black vertices (while the north-west and south-east boundaries are composed of white ones); see~Fig.~\ref{fig:A-reduction}. We consider the homogeneous dimer model on~$A_{n+1}$, each edge has weight~$1$.

It is well known that correlation functions of the dimer model remain invariant under the following transforms (e.g., see~\cite[Fig.~1]{KLRR} or Fig.~\ref{fig:A-update} below):
\begin{itemize}
\item contraction of a vertex of degree~$2$ (if the two edge weights are equal to each other);
\item merging of parallel edges, the new edge weight is equal to the sum of initial ones;
\item gauge transformations, which amount to a simultaneous multiplication of weights of all edges adjacent to a given vertex by the same factor;
\item the so-called urban renewal move {(see Fig.~\ref{fig:urbanrenewal})}.
\end{itemize}

\begin{figure}
\includegraphics[width=0.7\textwidth]{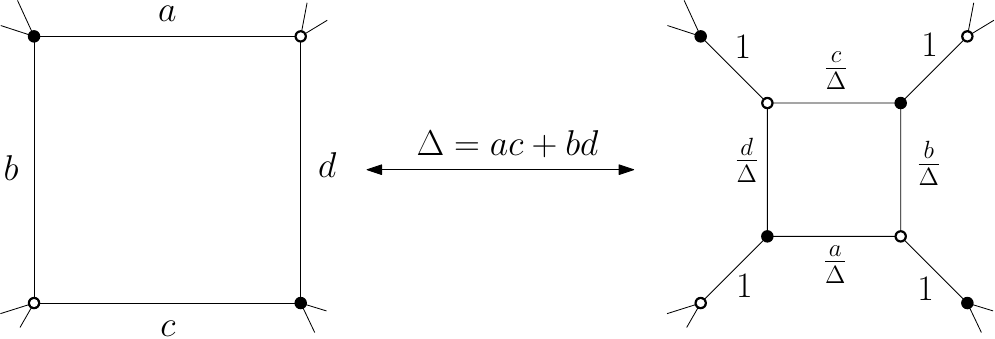}
\caption{{Local change of {a planar graph carrying the bipartite dimer model and of} the edge weights under the urban renewal move.}}\label{fig:urbanrenewal}
\end{figure}

\begin{figure}
\includegraphics[clip, trim=3.2cm 15.4cm 13.2cm 0.6cm, width=0.8\textwidth]{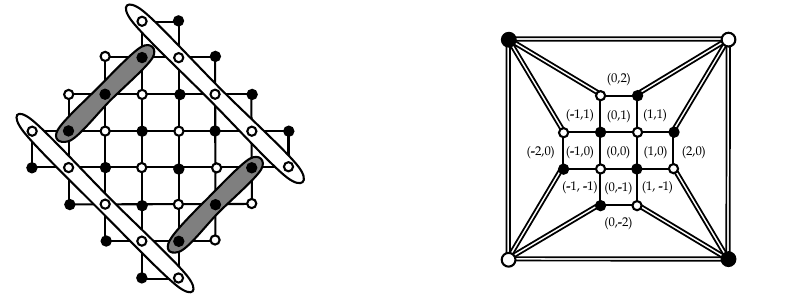}
\caption{Aztec diamond~$A_{n+1}$ and its reduction~$A'_{n+1}$ for~$n=3$. Double edges of~$A'_{n+1}$ have weight~$2$. The labeling~$(j,k)$ of faces of~$A'_{n+1}$ is shown, $|j|+|k|\!<\!n$.\label{fig:A-reduction}}
\end{figure}
Let~$A'_{n+1}$, the \emph{reduced Aztec diamond} of size~$n\!+\!1$, be obtained from~$A_{n+1}$ by the following sequence of moves (see Fig.~\ref{fig:A-reduction}):
\begin{enumerate}
\item contract the vertices~$(p,q)$ of~$A_{n+1}$ with~$p+q=n+1$, call~$w_{NE}$ the new (white) vertex;

\noindent contract the vertices~$(p,q)$ of~$A_{n+1}$ with~$p+q=-(n+1)$ to a new vertex~$w_{SW}$;

\item similarly, contract the vertices~$(p,q)$ of~$A_{n+1}$ with~$q-p=\pm (n+1)$ to new vertices~$b_{NW}$,~$b_{SE}$;

\item merge pairwise all the $4n$ pairs of parallel edges obtained during the first two steps.

\end{enumerate}
By construction, the reduced Aztec diamond~$A_{n+1}'$ is composed of~$A_{n-1}$ and four additional `boundary' vertices~$w_{NE}$, $b_{NW}$, $w_{SW}$ and~$b_{SE}$, which can be thought of as located at positions~$(\pm n,\pm n)$. The vertex~$w_{NE}$ is connected to all vertices on the north-east boundary of~$A_{n-1}$ as well as to~$b_{NW}$ and~$b_{SE}$ (and similarly for other boundary vertices). All edges of~$A_{n+1}'$ adjacent to boundary vertices have weight~$2$ while all other edges keep weight~$1$.

Clearly, the \emph{faces} (except the outer one) of the reduced Aztec diamond~$A'_{n+1}$ of size~$n+1$ can be naturally indexed by pairs~$(j,k)\in\mathbb Z^2$ such that~$|j|+|k|<n$; see Fig.~\ref{fig:A-reduction}.

We now describe how one can recursively construct reduced Aztec diamonds using the local transforms listed above. To initialize the procedure, note that the reduced Aztec diamond~$A'_2$ is a square formed by four outer vertices, with all edges having weight~$2$. To pass from~$A'_{n+1}$ to~$A'_{n+2}$, one uses the following operations (see Fig.~\ref{fig:A-update}, top):
\begin{enumerate}
\item split all edges adjacent to boundary vertices into pairs of parallel edges of weight~$1$;
\item perform the urban renewal moves with those faces of~$A'_{n+1}$ for which~$j+k+n$ is odd as shown on Fig.~\ref{fig:A-update}; note that the new vertical and horizontal edges have weight~$\frac12$;
\item contract all vertices of degree~$2$; note that now all edges adjacent to boundary vertices have weight~$1$ while all other edges have weight~$\frac12$;
\item multiply all edge weights by~$2$ (this is a trivial gauge transformation).
\end{enumerate}
\begin{figure}
\includegraphics[clip, trim=2.2cm 8.5cm 7.2cm 1.4cm, width=\textwidth]{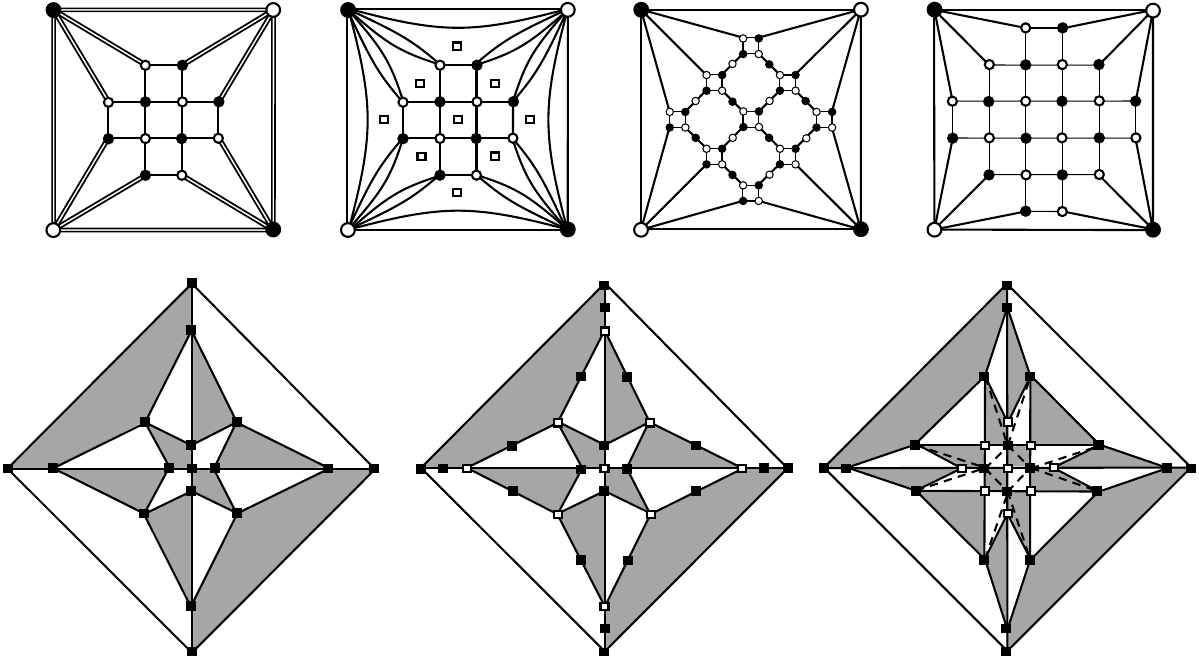}
\caption{A sequence of updates leading from~$A'_4$ to~$A'_5$. Top row: graphs carrying the dimer model; double, thick and thin edges have weight~$2$, $1$ and~$\frac12$, respectively. Bottom row: the same moves lead from the t-embedding~$\cT_3$ to~$\cT_4$; the last move (contraction of vertices of degree~$2$) removes dashed edges. Positions of urban renewals/central moves are indicated by white squares.\label{fig:A-update}}
\end{figure}

\subsection{Symmetric t-embeddings of reduced Aztec diamonds.}\label{sub:Aztec-temb} We now recall the construction introduced in~\cite{KLRR} under the name \emph{Coulomb gauges} and discussed in~\cite{CLR2,CLR1} under the name \emph{t-embeddings}; we refer to these papers for more details.
\begin{definition} Let~$\cG$ be a weighted planar bipartite graph and $\nu_{bw}$ denote the edge weights, where~$b$ and~$w$ stand for adjacent black and white vertices of~$\cG$, respectively. Let~$f$ be an inner face of~$G$ of degree~$2d$ with vertices denoted by~$w_1,b_1,\ldots,w_d,b_d$ in counterclockwise order; we set~$w_{d+1}:=w_1$. \emph{The $X$ variable} associated with~$f$ is defined as
\[
X_f\ :=\ \prod_{k=1}^d\frac{\nu_{b_kw_k}}{\nu_{b_kw_{k+1}}}\,.
\]
\end{definition}
Given a finite weighted planar bipartite graph~$\cG$ with a marked `outer' face~$f_\mathrm{out}$, let~$\cG^*$ denote the \emph{augmented dual} graph to~$\cG$, constructed as follows (see Fig.~\ref{fig:A-update}):
\begin{itemize}
\item to each inner face of~$\cG$, a vertex of~$\cG^*$ is associated;
\item $\deg f_\mathrm{out}$ vertices are associated to the outer face~$f_\mathrm{out}$ of~$\cG$;
\item to each edge of~$\cG$, a dual edge of~$\cG^*$ is associated;
\item an additional cycle of length~$\deg f_\mathrm{out}$ connecting the outer vertices is included into~$\cG^*$ (note that this cycle is not included into the graph~$\cG^*$ in the notation of~\cite{CLR1}).
\end{itemize}

\begin{definition} A \emph{t-embedding} of a finite weighted planar bipartite graph~$\cG$ is an embedding $\cT:\cG^*\to\C$ of its augmented dual $\cG^*$ such that the following conditions are satisfied:
\begin{itemize}
 \item under the embedding~$\cT$, the edges of~$\cG^*$ are non-degenerate straight segments, the faces are convex and do not overlap; the outer face of~$\cT(\cG^*)$ corresponds to the cycle replacing~$f_\mathrm{out}$ in the augmented dual~$\cG^*$;
 \item angle condition: for each inner vertex~$v^*$ of~$\cT(\cG^*)$, the sum of the angles at the corners corresponding to black faces is equal to $\pi$ (and similarly for white faces);
 \item moreover, for each inner face of~$\cG$, if~$v^*$ denotes the corresponding dual vertex in $\cG^*$ with neighbors $v^*_1,\ldots,v^*_{2d}$ such that the (dual) edge~$v^*v^*_{2k-1}$ is dual to the edge~$w_kb_k$ of~$\cG$ and~$v^*v^*_{2k}$ is dual to the edge~$b_kw_{k+1}$ (as above, we assume that~$w_1,b_1,\ldots,w_k,b_k$ are listed in counterclockwise order and~$w_{d+1}:=w_1$), then we have
\begin{equation}
\label{eq:Xf=dT/dT}
X_f\ =\ (-1)^{d+1}\prod_{k=1}^d\frac{\cT(v^*)-\cT(v^*_{2k-1})}{\cT(v^*_{2k})-\cT(v^*)}\,.
\end{equation}
\end{itemize}
\end{definition}
It is worth noting that the equation~\eqref{eq:Xf=dT/dT} implies that the \emph{geometric} weights (dual edge lengths)~$|\cT(v^*)-\cT(v^*_j)|$ are gauge equivalent to~$\nu_{bw}$. Therefore, in order to study the bipartite dimer model on an \emph{abstract} planar graph~$\cG$ one can first find a t-embedding of this graph and then consider the bipartite dimer model on this embedding with geometric weights. In our paper we apply this general philosophy to (reduced) Aztec diamonds~$A'_{n+1}$.

\begin{remark} In~\cite{CLR2}, the notion of a \emph{perfect} t-embedding of finite graphs is introduced. The additional condition required at boundary vertices of~$\cT(\cG^*)$ is that the outer face is a tangential polygon and that, for all outer vertices, the inner edges of~$\cT(\cG^*)$ are bisectors of the corresponding angles (in other words, the lines containing these edges pass through the center of the inscribed circle). In particular, the symmetric t-embeddings of reduced Aztec diamonds (defined in the next paragraph) are perfect.
\end{remark}

It was shown in~\cite{KLRR} that, if $\cG$ is a planar graph with outer face of degree $4$ and if we prescribe the four outer vertices of $\cG^*$ to be mapped to the four vertices of a given convex quadrilateral, then there exist two (potentially equal) t-embeddings of $G$ with these prescribed boundary conditions.  To respect the symmetries of the reduced Aztec diamonds~$A'_{n+1}$, we will study their t-embeddings~$\cT_n$ with symmetric boundary conditions. Namely, if~$v^*_E$, $v^*_N$, $v^*_W$ and~$v^*_S$ denote the four outer vertices of the augmented dual to~$A'_{n+1}$, then we require
\begin{equation}
\label{eq:temb-bc}
\cT_n(v^*_E)=1,\qquad \cT_n(v^*_N)=i,\qquad \cT_n(v^*_W)=-1,\qquad \cT_n(v^*_S)=-i;
\end{equation}
see Fig.~\ref{fig:Aztec123} for t-embeddings~$\cT_1$, $\cT_2$, $\cT_3$ of reduced Aztec diamonds~$A'_2$, $A'_3$, $A'_4$ satisfying~\eqref{eq:temb-bc}.
As explained in~\cite{KLRR}, the notion of t-embeddings is fully compatible with the local moves in the dimer model. Therefore, we can inductively construct t-embeddings of the reduced Aztec diamond~$A'_{n+1}$ with boundary conditions~\eqref{eq:temb-bc} starting with that of~$A'_2$. It is easy to see that the boundary conditions~\eqref{eq:temb-bc} yield~$\cT_1(0,0)=0$. In particular, this implies that~\eqref{eq:temb-bc} defines a t-embedding of~$A'_{n+1}$ uniquely.

Recall that the faces of~$A'_{n+1}$ are labeled by pairs~$(j,k)\in\mathbb{Z}^2$ such that~$|j|+|k|<n$ and that we denote by~$\cT_n$ the t-embedding of the reduced Aztec diamond~$A'_{n+1}$.
\begin{proposition}\label{prop:Tn-recursion} For each~$n\ge 1$, the t-embeddings~$\cT_{n+1}$ and~$\cT_n$ of the reduced Aztec diamonds~$A'_{n+2}$ and~$A'_{n+1}$ are related as follows. The positions~$\cT_{n+1}(j,k)$ are given by
\begin{enumerate}
\item if~$\{|j|,|k|\}=\{0,n\}$, then
\[
\begin{array}{rlrl}
\cT_{n+1}(-n,0)\!\!\!\!\!&=\frac{1}{2}\big(\cT_n(-n\!+\!1,0)+\cT_n(v_W^*)\big),&
\cT_{n+1}(0,n)\!\!\!\!\!&=\frac{1}{2}\big(\cT_n(0,n\!-\!1)+\cT_n(v_N^*)\big),\\[5pt]
\cT_{n+1}(0,-n)\!\!\!\!\!&=\frac{1}{2}\big(\cT_n(0,-n\!+\!1)+\cT_n(v_S^*)\big),&
\cT_{n+1}(n,0)\!\!\!\!\!&=\frac{1}{2}\big(\cT_n(n\!-\!1,0)+\cT_n(v_E^*)\big);
\end{array}
\]
\item if $1\le j\le n-1$ and~$k=n-j$, then
\[
\cT_{n+1}(j,n\!-\!j)=\tfrac{1}{2}\big(\cT_n(j\!-\!1,n\!-\!j)+\cT_n(j,n\!-\!j\!-\!1)\big)
\]
and similarly for~$\cT_{n+1}(-j,j-n)$, $\cT_{n+1}(-j,n-j)$ and~$\cT_{n+1}(j,j-n)$;
\smallskip
\item if~$|j|+|k|<n$ and~$j+k+n$ is even, then~$\cT_{n+1}(j,k)=\cT_n(j,k)$;
\smallskip
\item if~$|j|+|k|<n$ and~$j+k+n$ is odd, then
\[
\cT_{n+1}(j,k)=\tfrac{1}{2}\big(\cT_{n+1}(j\!+\!1,k)+\cT_{n+1}(j\!-\!1,k)+\cT_{n+1}(j,k\!+\!1)+\cT_{n+1}(j,k\!-\!1)\big)-\cT_n(j,k).
\]
\end{enumerate}
\end{proposition}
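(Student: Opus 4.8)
The plan is to build $\cT_{n+1}$ by applying, one at a time, the four update operations of Section~\ref{sub:Aztec-comb} to the t-embedding $\cT_n$, while keeping track of the positions of all vertices of the augmented dual. This is legitimate since, as recalled above and proved in~\cite{KLRR}, t-embeddings are compatible with every local move of the dimer model; the last operation (multiplication of all weights by~$2$) is a global gauge transformation and hence does not change the embedding. Because the boundary conditions~\eqref{eq:temb-bc} determine the t-embedding of $A'_{n+2}$ uniquely, it will be enough to show that the positions produced by these moves satisfy the formulas in cases~(1)--(4); equivalently, one may take~(1)--(4) as the definition of candidate positions and verify that they satisfy the relation~\eqref{eq:Xf=dT/dT} together with the angle and convexity conditions, concluding by uniqueness.

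The conceptual core is the effect of the urban renewal on an interior square face, which produces case~(4). After the splitting in step~(1) all edges carry equal weight, so such a face has $X_f=1$, and for a degree-$4$ face ($d=2$) the defining relation~\eqref{eq:Xf=dT/dT} becomes
\[
(P-A)(P-C)+(P-B)(P-D)=0,
\]
where $P=\cT(v^*)$ is the central vertex and $A,B,C,D$ are its four neighbours listed cyclically. This is a quadratic in $P$ whose two roots $P,P'$ satisfy $P+P'=\tfrac12(A+B+C+D)$ by Vieta's formulas. The four neighbours are the faces with $j+k+n$ even, which are left in place by the move---this is case~(3), $\cT_{n+1}(j,k)=\cT_n(j,k)$---and the urban renewal relocates the central vertex to $P'=\cT_{n+1}(j,k)$, which, since the updated face again has $X_f=1$ with the same four neighbours, is the second root of the same quadratic. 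Hence $P'=\tfrac12(A+B+C+D)-P$, i.e.\ the formula of case~(4); note that, after rewriting, this is precisely the leap-frog discrete wave equation alluded to in the introduction.

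The boundary formulas~(1) and~(2) record the positions of the faces newly created along $|j|+|k|=n$. These positions are midpoints of two neighbours: this can be seen from the degree-$2$ (bigon) faces produced by the splitting in step~(1), for which~\eqref{eq:Xf=dT/dT} with $d=1$ and unit weights forces the dual vertex to the midpoint of its two neighbours, or equivalently from the contraction of the degree-$2$ vertices in step~(3), which has the same effect. For a generic new boundary face the two neighbours are the adjacent interior faces, giving case~(2); for the four extreme faces at $(\pm n,0)$ and $(0,\pm n)$ one neighbour is the corresponding outer vertex $v^*_E,v^*_N,v^*_W,v^*_S$, giving case~(1).

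The hard part will be the bookkeeping near the boundary rather than the interior computation, which is immediate from Vieta's formulas. One has to follow how the weights evolve through steps~(1)--(3) for the faces with $|j|+|k|=n-1$, check that after the full update the homogeneous pattern ($X_f=1$ at all interior faces) is reproduced so that the induction closes, and confirm that the four neighbours entering case~(4) sit at the positions already fixed by cases~(1)--(3). One also has to make the statement ``urban renewal $=$ passing to the second root'' precise, matching the two a priori t-embeddings of~\cite{KLRR} with the two solutions of the quadratic above (in particular that the two roots are distinct, so that the move is non-degenerate), and to verify non-degeneracy and convexity so that the outcome is a bona fide t-embedding to which uniqueness applies.
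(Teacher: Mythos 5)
Your proposal follows essentially the same route as the paper's proof: the edge-splitting step places the new dual vertices at midpoints (giving cases (1)--(2)), the faces untouched by urban renewal keep their positions (case (3)), and the central/Miquel move at faces with $X_f=1$ reduces via~\eqref{eq:Xf=dT/dT} to a quadratic whose two roots are $\cT_n(j,k)$ and $\cT_{n+1}(j,k)$, so that Vieta's formula yields the linear relation of case (4). The bookkeeping issues you flag at the end (weights near the boundary, non-degeneracy, which root the move selects) are likewise left implicit in the paper, which simply invokes the compatibility of t-embeddings with the local moves established in~\cite{KLRR}.
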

\begin{proof} When constructing the reduced Aztec diamond~$A'_{n+2}$ from~$A'_{n+1}$ following the procedure described in Section~\ref{sub:Aztec-comb}, the first step is to split edges of~$A'_{n+1}$ that are adjacent to boundary vertices into two parallel edges of equal weight. This amounts to adding the midpoints of the corresponding edges of~$\cT_n$ to the t-embedding, i.e., to~(1) and~(2); see the first step in Fig.~\ref{fig:A-update}.

Further, (3) reflects the fact that, for~$j+k+n$ even, the inner faces~$(j,k)$ of~$\cT_n$ are not destroyed by urban renewal moves and hence the positions of the corresponding dual vertices in the t-embedding~$\cT_{n+1}$ remain the same as in~$\cT_n$; see Fig.~\ref{fig:A-update}.

Finally, (4) follows from the explicit description of the urban renewal procedure in terms of the t-embedding, which is called the central (or Miquel) move in~\cite[Section~5]{KLRR}. In general, the position of~$\cT_{n+1}(j,k)$ is given by a \emph{rational} function involving the positions, already known, of four neighbors~$\cT_{n+1}(j\pm 1,k)$, $\cT_{n+1}(j,k\pm 1)$ and the position~$\cT_n(j,k)$. However, in our case we can additionally use the fact that $X_f=1$ for all $X$ variables assigned to faces at which the urban renewals are performed. Therefore, equation~\eqref{eq:Xf=dT/dT} implies that~$\cT_{n+1}(j,k)$ and~$\cT_n(j,k)$ are the two roots of the quadratic equation
\[
(z-\cT_{n+1}(j+1,k))(z-\cT_{n+1}(j-1,k))+(z-\cT_{n+1}(j,k+1))(z-\cT_{n+1}(j,k-1))=0.
\]
The required \emph{linear} expression for~$\cT_{n+1}(j,k)$ easily follows from Vieta's formula.
\end{proof}

\begin{figure}
\includegraphics[clip,trim=2.8cm 6.4cm 3.6cm 1.4cm, width=\textwidth]{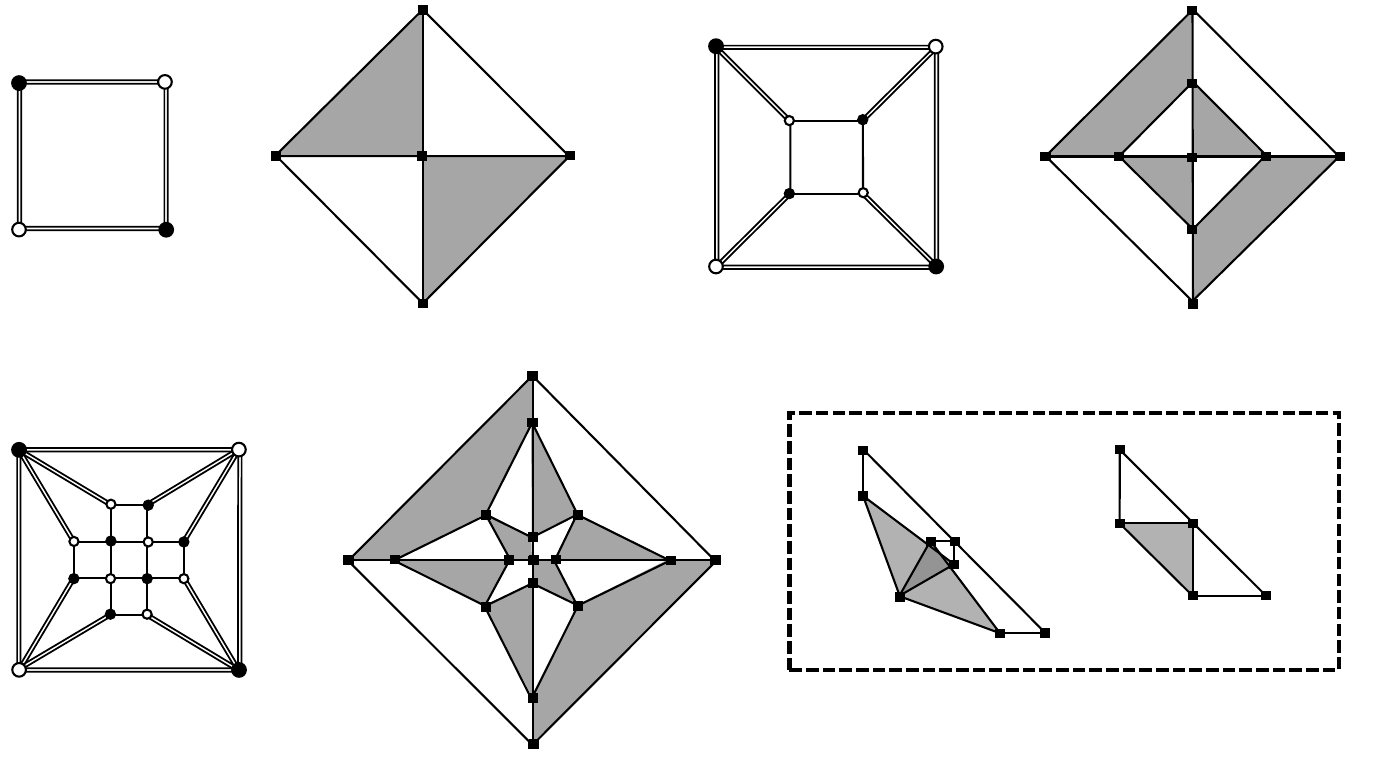}
\caption{Reduced Aztec diamonds~$A'_2$, $A'_3$, $A'_4$ and their symmetric \mbox{t-embeddings} $\cT_1$, $\cT_2$, $\cT_3$. \textsc{Bottom-right:} illustration of the origami maps $\cO_3$ and~$\cO_2$; the colors of faces inherit those from the north-east parts of~$\cT_3$ and~$\cT_2$. As discussed below (see Remark~\ref{rem:O'def} and Eq.~\eqref{eq:f_0=o(1)}), the images of the origami maps~$\cO_n$ are asymptotically one-dimensional for large~$n$.\label{fig:Aztec123}}
\end{figure}

\subsection{Origami maps associated to the t-embeddings of reduced Aztec diamonds}\label{sub:origami}
To each t-embedding~$\cT:\cG^*\to\C$ one can associate the so-called \emph{origami map}~$\cO:\cG^*\to\C$, defined up to an additive constant and a global unimodular pre-factor. Informally speaking, this map can be constructed as follows: choose a reference white face~$w_0$ of~$\cT(\cG^*)$ and set~$\cO(v^*):=\cT(v^*)$ for all vertices~$v^*$ of~$\cG^*$ adjacent to~$w_0$. To find positions~$\cO(v^*)$ of vertices of~$\cG^*$ lying at distance~$1$ from~$w_0$, glue copies of \emph{reflected} black faces adjacent to~$w_0$ to the image~$\cO(w_0)$. Continue this procedure inductively by gluing copies of white (resp., black) faces of $\cT(\cG^*)$ to the already constructed part of~$\cO(\cG^*)$, reverting the orientation of black ones and keeping the same adjacency relations between faces as in $\cT(\cG^*)$. The angle condition guarantees that this procedure is locally (and hence globally) consistent; see Fig.~\ref{fig:Aztec123} for an example.

We refer the reader to~\cite{KLRR} and~\cite[Section~2]{CLR1} for the formal definition of the origami map, note that this definition is also fully consistent with the local moves in the dimer model. In particular, an urban renewal move does not affect positions $\cO(v^*)$ except that of the dual vertex associated with the face of~$\cG$ at which this move is performed. Moreover, it is easy to see that the identity~\eqref{eq:Xf=dT/dT} and the angle condition imply that
\begin{equation}\label{eq:Xf=dO/dO}
X_f\ =\ (-1)^{d+1}\prod_{k=1}^d\frac{\cO(v^*)-\cO(v^*_{2k-1})}{\cO(v^*_{2k})-\cO(v^*)}\,,
\end{equation}
where we use the same notation as in~\eqref{eq:Xf=dT/dT}. Therefore, the update rule for the origami map~$\cO$ under an urban renewal is exactly the same as the central move for~$\cT$.

Let~$\cO_n$ be the origami map associated to the t-embedding~$\cT_n$ of the reduced Aztec diamond~$A'_{n+1}$ constructed starting with the north-east outer face~$w_{NE}$ of~$\cT_n$. In other words, we start the iterative construction of~$\cO_n$ by declaring~$\cO_n(v^*_E):=\cT_n(v^*_E)=1$, $\cO_n(v^*_N):=\cT_n(v^*_N)=i$ and~$\cO_n(j,n-1-j):=\cT_n(j,n-j-1)$ for all~$0\le j\le n-1$. It is easy to see that
\begin{equation}
\label{eq:origami-bc}
\cO_n(v^*_W)=\cO_n(v^*_E)=1\quad \text{and}\quad \cO_n(v^*_N)=\cO_n(v^*_S)=i
\end{equation}
for all~$n\ge 1$ and that~$\cO_1(0,0)=0$.
\begin{proposition}\label{prop:On-recursion} Let~$n\ge 1$. The origami map~$\cO_{n+1}$ can be constructed from~$\cO_n$ using the same update rules (1)--(4) as in Proposition~\ref{prop:Tn-recursion}, with boundary conditions~\eqref{eq:origami-bc} instead of~\eqref{eq:temb-bc}.
\end{proposition}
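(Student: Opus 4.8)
The plan is to repeat the proof of Proposition~\ref{prop:Tn-recursion} with~$\cO$ in place of~$\cT$, the key observation being that that proof used only three ingredients, each of which has an exact counterpart for origami maps: the sequence of local moves leading from~$A'_{n+1}$ to~$A'_{n+2}$ (Section~\ref{sub:Aztec-comb}); the values of the $X$ variables on the relevant faces; and the defining identity~\eqref{eq:Xf=dT/dT}. Since the origami map is, by construction, compatible with the local moves of the dimer model and obeys the identical relation~\eqref{eq:Xf=dO/dO} --- with the \emph{same} $X$ variables, as these depend on the weighted graph~$A'_{n+1}$ alone and not on the embedding --- the computation reproducing rules~(1)--(4) carries over, and only the boundary data changes.

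Concretely, I would treat the four rules as follows. Rules~(1)--(2): the first step splits each boundary edge into two parallel edges of equal weight, creating a bigon, i.e.\ a face of degree~$2$; applying~\eqref{eq:Xf=dO/dO} to such a face (so that~$d=1$ and~$X_f=\nu_{bw}/\nu_{bw}=1$) forces the new vertex to be the midpoint of the~$\cO_n$-images of its two neighbours, matching the midpoint prescription of~(1)--(2). Rule~(3): since a central move alters~$\cO(v^*)$ only at the face where it is performed, the faces with~$j+k+n$ even are not displaced, whence~$\cO_{n+1}(j,k)=\cO_n(j,k)$. Rule~(4): after the splits but before the central moves, the face~$(j,k)$ with~$j+k+n$ odd still sits at~$\cO_n(j,k)$ while its four neighbours already occupy their~$\cO_{n+1}$-positions, and since~$X_f=1$ the relation~\eqref{eq:Xf=dO/dO} exhibits~$\cO_n(j,k)$ and~$\cO_{n+1}(j,k)$ as the two roots of
\[
(z-\cO_{n+1}(j\!+\!1,k))(z-\cO_{n+1}(j\!-\!1,k))+(z-\cO_{n+1}(j,k\!+\!1))(z-\cO_{n+1}(j,k\!-\!1))=0;
\]
Vieta's formula then yields the linear relation of rule~(4).

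The one genuine difference from Proposition~\ref{prop:Tn-recursion} lies in the boundary data fed into rule~(1). Because the origami map folds the boundary of~$\cT_n$, the outer vertices carry the identified values~\eqref{eq:origami-bc}, so that~$\cO_n(v^*_W)=\cO_n(v^*_E)=1$ and~$\cO_n(v^*_S)=\cO_n(v^*_N)=i$, in contrast with the alternating values prescribed by~\eqref{eq:temb-bc}. Substituting these into the four midpoint formulas of rule~(1) is precisely what ``boundary conditions~\eqref{eq:origami-bc} instead of~\eqref{eq:temb-bc}'' means.

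I expect the point requiring the most care to be the bookkeeping of the global frame. Unlike~$\cT_n$, the map~$\cO_n$ is defined only up to a global rotation and translation and need not be injective (faces may overlap), so the equalities in rules~(3) and~(4) --- in particular the assertion that the pre-renewal position~$\cO_n(j,k)$ is a genuine root of the quadratic above --- only make sense once the frame is pinned down. I would fix it as in the definition of~$\cO_n$, anchoring the construction at the north-east outer face~$w_{NE}$ with~$\cO_n(v^*_E)=1$ and~$\cO_n(v^*_N)=i$; since this face and its orientation persist through every local move used to pass from~$A'_{n+1}$ to~$A'_{n+2}$, the frames of~$\cO_n$ and~$\cO_{n+1}$ are compatible and all the displacement statements transfer from~$\cT$ to~$\cO$ without further adjustment.
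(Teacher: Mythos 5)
Your proposal is correct and takes essentially the same route as the paper: the paper's proof is precisely that the argument for Proposition~\ref{prop:Tn-recursion} repeats verbatim, with the least trivial rule~(4) now following from~\eqref{eq:Xf=dO/dO} exactly as it followed from~\eqref{eq:Xf=dT/dT}. Your extra details --- deriving the midpoint rules~(1)--(2) from the bigon $X$-variable and pinning down the global frame via the north-east anchoring --- merely spell out points the paper leaves implicit (the normalization of~$\cO_n$ is already fixed in its definition, and the invariance of~$\cO(v^*)$ away from the renewed face is stated just before the proposition).
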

\begin{proof} The proof repeats that of Proposition~\ref{prop:Tn-recursion}. The least trivial update rule (4) for~$\cO_{n+1}$ follows from~\eqref{eq:Xf=dO/dO} in the same way as it follows from~\eqref{eq:Xf=dT/dT} for~$\cT_{n+1}$.
\end{proof}

\subsection{$\cT_n$ and~$\cO_n$ as solutions to the discrete wave equation} We now interpret the recurrence relations from Proposition~\ref{prop:Tn-recursion} as a discrete wave equation in the cone~$|j|+|k|<n$.

Denote by $\Z_+$ the set of all non-negative integers and by $\Lambda^+$ the cubic-centered half-space, that is the collection of all triples of integers $(j,k,n)\in\Z^2\times\Z_+$ such that $j+k+n$ is \emph{odd}. Let a function~$f_0:\Lambda^+\to\R$ be defined by the following conditions:
\begin{enumerate}
\item $f_0(j,k,0)=0$ for all~$j$ and~$k$ (such that~$j+k$ is odd);

\noindent $f_0(0,0,1)=1$ and $f_0(j,k,1)=0$ for all other pairs~$j,k$ (such that~$j+k$ is even);

\smallskip

\item for all~$(j,k,n+1)\in\Lambda^+$ such that~$n\ge 1$ we have
\begin{align}\label{eq:discretewave}
f(j,k,n+1)&+f(j,k,n-1)\\ &=\ \tfrac{1}{2}\big(f(j+1,k,n)+f(j-1,k,n)+f(j,k+1,n)+f(j,k-1,n)\big). \notag
\end{align}
\end{enumerate}

We call~\eqref{eq:discretewave} the \emph{discrete wave equation} and~$f_0$ is the \emph{fundamental solution} to this equation. To justify the name note that the formal subtraction of a (non-defined as~$j+k+n$ has the wrong parity) term~$2f(j,k,n)$ from both sides of~\eqref{eq:discretewave} allows one to see it as a discretization of the two-dimensional wave equation~$\partial_{tt}\psi\,=\,\tfrac{1}{2}(\partial_{xx}\psi+\partial_{yy}\psi)$.

\begin{definition}
Let $b_0$, $b_E$, $b_N$, $b_W$ and $b_S$ be five complex numbers. We say that a function $f:\Lambda^+\rightarrow\C$ solves the \emph{discrete wave equation in the cone $|j|+|k|<n$} with boundary conditions $(b_0;b_E,b_N,b_W,b_S)$ if it satisfies the following properties:
\begin{enumerate}\setcounter{enumi}{-1}
 \item $f(j,k,n)=0$ if~$|j|+|k|\ge n$ (in particular,~$f(j,k,0)=0$ for all~$j,k$);

 \smallskip

 \item boundary condition at the tip of the cone: $f(0,0,1)=b_0$;\

 \smallskip

 \item boundary conditions at the edges: for each~$n\ge 1$ we have
 \[
 \begin{array}{rlrl}
f(-n,0,n\!+\!1)\!\!\!&=\ \frac12\big(f(-n\!+\!1,0,n)+b_W\big),\ & f(0,n,n\!+\!1)\!\!\!&=\ \frac12\big(f(0,n\!-\!1,n)+b_N\big),\\[5pt]
f(0,-n,n\!+\!1)\!\!\!&=\ \frac12\big(f(0,-n\!+\!1,n)+b_S\big), & f(n,0,n\!+\!1)\!\!\!&=\ \frac12\big(f(n\!-\!1,0,n)+b_E\big).
\end{array}
\]
(Thus,~$f(n,0,n\!+\!1)=2^{-n}b_0+(1-2^{-n})b_E$ and similarly at other edges of the cone.)

\smallskip

\item discrete wave equation in the bulk of the cone and on its sides: for each~$(j,k,n+1)\in\Lambda^+$ such that \mbox{$n\ge 1$,} $|j|+|k|\le n$ and~$\{|j|,|k|\}\neq\{0,n\}$, the identity~\eqref{eq:discretewave} is satisfied. In particular, we require that
\[
f(j,n-j,n+1)=\frac{1}{2}(f(j-1,n-j,n)+f(j,n-j-1,n))
\]
for all~$1\le j\le n-1$ and similarly on other sides of the cone.
\end{enumerate}
\end{definition}
It is easy to see that the above conditions define the function~$f:\Lambda^+\to\C$ uniquely. Moreover, the solution in the cone with boundary conditions~$(1;0,0,0,0)$ is given by the fundamental solution~$f_0$ in the half-space~$\Lambda^+$. Denote by $f_E$ the solution in the cone with boundary conditions $(0;1,0,0,0)$ and similarly for~$f_N,f_W$ and~$f_S$. By linearity, a general solution in the cone with boundary conditions~$(b_0;b_E,b_N,b_W,b_S)$ can be written as
\[
f\ =\ b_0f_0+b_Ef_E+b_Nf_N+b_Wf_W+b_Sf_S.
\]
The functions~$f_E,f_N,f_W$ and~$f_S$ represent the impact of a constant source moving in the east, north, west and south directions, respectively, and thus can be expressed via~$f_0$. In particular,
\begin{equation}
\label{eq:fE-discr}
\textstyle f_E(j,k,n)\ =\ \frac{1}{2}\sum_{s=1}^{n-1} f_0(j-s,k,n-s)
\end{equation}
and similarly for $f_N,f_W$ and~$f_S$.
\begin{proposition} \label{prop:tembdiscretewave}
For all~$n\ge 1$, the values~$\cT_n(j,k)$ for~$j+k+n$ odd are given by the solution to the discrete wave equation in the cone~$|j|+|k|<n$ with boundary conditions~$(0;1,i,-1,-i)$.

Similarly, the values~$\cO_n(j,k)$ for~$j+k+n$ odd are given by the solution to the discrete wave equation in the cone~$|j|+|k|<n$ with boundary conditions~$(0;1,i,1,i)$.

If~$|j|+|k|<n$ and~$j+k+n$ is even, then~$\cT_n(j,k)=\cT_{n-1}(j,k)$ and~$\cO_n(j,k)=\cO_{n-1}(j,k)$.
\end{proposition}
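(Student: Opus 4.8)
The plan is to read the identification straight off the recursions of Propositions~\ref{prop:Tn-recursion} and~\ref{prop:On-recursion}, invoking the uniqueness of the solution to the discrete wave equation in the cone noted just after its definition. Thus I would set
\[
f(j,k,m):=\cT_m(j,k)\quad\text{for }j+k+m\text{ odd, }|j|+|k|<m,
\]
extended by $0$ whenever $|j|+|k|\ge m$, and check that this $f$ satisfies conditions~(0)--(3) of the definition with boundary data $(0;1,i,-1,-i)$. Since the solution is unique, this proves the first assertion, and the origami case then follows verbatim from Proposition~\ref{prop:On-recursion}: the only change is that the outer values~\eqref{eq:origami-bc} replace~\eqref{eq:temb-bc}, which produces the data $(0;1,i,1,i)$.

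Conditions~(0)--(2) are immediate. Condition~(0) is the extension convention, condition~(1) is $\cT_1(0,0)=0=b_0$, and condition~(2) is exactly update rules~(1) and~(2) of Proposition~\ref{prop:Tn-recursion}: setting $b_E:=\cT_n(v^*_E)=1$, $b_N:=i$, $b_W:=-1$, $b_S:=-i$ (constant in $n$ by~\eqref{eq:temb-bc}), rule~(1) reads $\cT_{n+1}(n,0)=\tfrac12(\cT_n(n-1,0)+b_E)$, and similarly at the three other edges, while rule~(2) gives the side relations $\cT_{n+1}(j,n-j)=\tfrac12(\cT_n(j-1,n-j)+\cT_n(j,n-j-1))$. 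Iterating rule~(1) down to the tip yields the closed form $\cT_{n+1}(n,0)=2^{-n}b_0+(1-2^{-n})b_E$ recorded in the definition.

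The crux is the bulk equation~\eqref{eq:discretewave} (condition~(3)). Fix a triple $(j,k,n+1)\in\Lambda^+$ with $|j|+|k|<n$; then $j+k+n$ is even, so the preservation rule~(3) gives $\cT_{n+1}(j,k)=\cT_n(j,k)$, identifying the odd-parity value $f(j,k,n+1)$ with $\cT_n(j,k)$. Applying rule~(4) instead to the \emph{previous} step from $\cT_{n-1}$ to $\cT_n$---legitimate because the face $(j,k)$ is urban-renewed there, $j+k+(n-1)$ being odd---gives
\[
\cT_n(j,k)+\cT_{n-1}(j,k)=\tfrac12\big(\cT_n(j+1,k)+\cT_n(j-1,k)+\cT_n(j,k+1)+\cT_n(j,k-1)\big).
\]
Replacing $\cT_n(j,k)$ by $\cT_{n+1}(j,k)$ on the left turns this into $f(j,k,n+1)+f(j,k,n-1)=\tfrac12\sum f(\cdot,n)$, i.e.\ exactly~\eqref{eq:discretewave}. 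The last assertion of the proposition---that the values of the opposite parity agree with those one level below---is then nothing but the preservation rule~(3) itself, applied to the faces left untouched by the urban renewals between two consecutive levels.

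The step I expect to demand the most care is the bookkeeping at the boundary of the cone, where~\eqref{eq:discretewave} degenerates. When $|j|+|k|=n-1$ the face $(j,k)$ is not present at level $n-1$, so the ``previous-step'' application of rule~(4) above must be replaced by the edge/side rules~(1)--(2); correspondingly, the two outward neighbours among $\cT_n(j\pm1,k),\cT_n(j,k\pm1)$ sit on $\{|j'|+|k'|=n\}$ and must vanish through the extension convention~(0). One then has to verify that the degenerate side form $f(j,n-j,n+1)=\tfrac12\big(f(j-1,n-j,n)+f(j,n-j-1,n)\big)$ is consistent with rule~(2) together with the preservation of the two inward neighbours; this is a short but genuinely parity-driven case check. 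Once it is in place, every remaining condition is a direct transcription of Propositions~\ref{prop:Tn-recursion} and~\ref{prop:On-recursion}, and uniqueness closes the argument for both $\cT_n$ and $\cO_n$.
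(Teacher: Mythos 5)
Your proof is correct and takes essentially the same route as the paper, whose proof consists of the single sentence that the proposition is a straightforward reformulation of the recurrences of Proposition~\ref{prop:Tn-recursion}; your write-up simply supplies the details (uniqueness of the cone solution, rule~(3) at step $n\to n+1$ combined with rule~(4) at step $n-1\to n$ for the bulk, rules~(1)--(2) plus the extension-by-zero convention at the cone's boundary). The only blemish is an index slip in your boundary discussion: for a triple $(j,k,n+1)\in\Lambda^+$ the parity constraint forces $|j|+|k|\in\{n,n-2,\dots\}$, so the case $|j|+|k|=n-1$ you worry about is vacuous, and the genuine boundary case $|j|+|k|=n$ is exactly the degenerate side form you check against rule~(2).
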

\begin{proof}
This is a straightforward reformulation of recurrence relations from Proposition~\ref{prop:Tn-recursion}.
\end{proof}
\begin{remark} \label{rem:O'def} Let
\[
\cO'_n:=e^{i\frac{\pi}{4}}\big(\cO_n-\tfrac{1}{2}(1+i)\big)
\]
be another version of the origami map (recall that~$\cO_n$ is defined up to a global additive and a unimodular multiplicative constant only). It is easy to see that~$\sqrt{2}\operatorname{Im}(\cO'_n(j,k))=f_0(j,k,n)$ and that the values~$\sqrt{2}\operatorname{Re}(\cO'_n(j,k))$ are given by the solution to the wave equation in the cone with boundary conditions~$(0;1,-1,1,-1)$, provided that~$j+k+n$ is odd.
\end{remark}

\section{The limit of t-embeddings and the Aztec conformal structure}
\label{sec:cont}
Recall that the equation~\eqref{eq:discretewave} can be viewed as a discretization of the wave equation
\begin{equation}
\label{eq:contwave}
\partial_{tt}\psi\,=\,\tfrac{1}{2}(\partial_{xx}\psi+\partial_{yy}\psi).
\end{equation}
The continuous counterpart~$2\psi_0$ of the fundamental solution~$f_0$ satisfies boundary conditions $\psi_0\big|_{t=0}=0$, $\partial_t\psi_0\big|_{t=0}=\delta_{(0,0)}$, the factor~$2$ comes from the area of the horizontal plaquette with vertices~$(\pm 1,0,0)$,~$(0,\pm1,0)$. Therefore,
\begin{equation}
\label{eq:cont-fundsol}
\psi_0(x,y,t)=\frac{1}{\pi}\begin{cases}(t^2-2x^2-2y^2)^{-\frac12}& \text{if}~ x^2+y^2<\tfrac{1}{2}t^2,\\ 0 & \text{otherwise.}\end{cases}
\end{equation}
The continuous counterpart of the function~\eqref{eq:fE-discr} is thus given by a \emph{self-similar} solution
\[
\psi_E(x,y,t)\ =\ \int_0^t\psi_0(x-s,y,t-s)ds\ =:\ \psi_E^{(1)}(x/t,y/t)
\]
to the wave equation~\eqref{eq:contwave} with a \emph{super-sonic} source moving with speed~$1$ in the east direction. In particular,
\begin{equation}
\label{eq:psiE1=}
\psi_E^{(1)}(x,y)\ =\ \begin{cases}1 & \text{if}\ \ (x,y)\in\diamondsuit\smallsetminus\overline{\mathbb{D}}_\diamondsuit\ \ \text{and}\ \ x>\tfrac{1}{2}, \\ 0 & \text{if}\ \ (x,y)\in\diamondsuit\smallsetminus\overline{\mathbb{D}}_\diamondsuit\ \ \text{and}\ \ x<\tfrac{1}{2},\end{cases}
\end{equation}
where~$\diamondsuit=\{(x,y)\in\R^2\cong\C:|x|+|y|<1\}$ and~$\mathbb{D}_\diamondsuit=\frac{\sqrt{2}}{2}\mathbb D\subset\diamondsuit$. Below we take for granted the following two facts (see Remark~\ref{rem:missing} and~\cite{BNR} for more details):
\begin{itemize}
\item The discrete fundamental solution~$f_0$ uniformly (in $j,k$) decays in time:
\begin{equation}
\label{eq:f_0=o(1)}
\max\nolimits_{j,k:(j,k,n)\in\Lambda^+}f_0(j,k,n)=o(1)\quad \text{as}\ \ n\to\infty;
\end{equation}
\item The function~$f_E$ (and similarly for~$f_N,f_W$ and~$f_S$) has the following asymptotics:
\begin{equation}
\label{eq:f_E-to-psi_E}
f_E(j,k,n)=\psi_E^{(1)}(j/n,k/n)+o(1)\quad \text{as}\ \ n\to\infty,
\end{equation}
uniformly over~$(j/n,k/n)$ on compact subsets of~$\diamondsuit$.
\end{itemize}

We now consider the mapping
\begin{align}
\label{eq:def-z-theta}
(x,y)\in \mathbb D_\diamondsuit\ \ &\mapsto\ \ (z(x,y),\vartheta(x,y))\in\diamondsuit\times\R;\\
&z(x,y)\ :=\ \psi^{(1)}_E(x,y)+i\psi^{(1)}_N(x,y)-\psi^{(1)}_W(x,y)-i\psi^{(1)}_S(x,y), \notag\\
&\vartheta(x,y)\ :=\ \tfrac{\sqrt{2}}{2}\big(\psi^{(1)}_E(x,y)-\psi^{(1)}_N(x,y)+\psi^{(1)}_W(x,y)-\psi^{(1)}_S(x,y)\big), \notag
\end{align}
let us emphasize that~$z(x,y)\neq x+iy$. It is worth noting that this mapping is also well-defined on the whole set~$\diamondsuit$. Due to~\eqref{eq:psiE1=}, it sends the four connected regions of~$\diamondsuit\smallsetminus\mathbb{D}_\diamondsuit$ (which correspond to four frozen regions of Aztec diamonds) to \emph{points}~$(\pm 1,{1}/{\sqrt{2}})$ and $(\pm i,{1}/{\sqrt{2}})$, respectively.

Recall that~$\cT_n$ denotes the t-embedding of the reduced Aztec diamond~$A'_{n+1}$ and that~$\cO'_n$ is a version of the origami map associated to~$\cT_n$ introduced in Remark~\ref{rem:O'def}. Proposition~\ref{prop:tembdiscretewave} and \eqref{eq:f_0=o(1)},~\eqref{eq:f_E-to-psi_E} give
\[
\cT_n(j,k)\ =\ z(j/n,k/n)+o(1),\qquad
\begin{array}{rcl}
\operatorname{Re}(\cO'_n(j,k))&=&\vartheta(j/n,k/n)+o(1),\\[5pt]
\operatorname{Im}(\cO'_n(j,k))&=&o(1)
\end{array}
\]
as~$n\to\infty$, uniformly over~$j/n$ and~$k/n$ on compact subsets of~$\diamondsuit$.

It is not hard to find an explicit expression for~$z(x,y)$ and~$\vartheta(x,y)$. To this end, introduce the polar coordinates~$(r,\phi)$ in the plane~$(x,y)$. A straightforward computation shows that each self-similar solution~$\psi^{(1)}(x/t,y/t)$ of the wave equation~\eqref{eq:contwave} satisfies
\begin{equation}
\label{eq:wave_polar}
\biggl[(1-2r^2)\partial_{rr}+\biggl(\frac{1}{r}-4r\biggr)\partial_r+\frac{1}{r^2}\partial_{\phi\phi}\biggr]\psi^{(1)}(x,y)\ =\ 0.
\end{equation}
This equation can be viewed as the harmonicity condition provided an appropriate change of the radial variable is made. Namely, let
\[
\zeta(x,y):=e^{i\phi(x,y)}\rho(x,y)\in\mathbb{D}\quad \text{for}\ \ (x,y)\in\mathbb{D}_\diamondsuit,
\]
where the new radial variable~$\rho=\rho(x,y)$ is defined by~\eqref{eq:rho-def}. (Note that in our context this change of variables naturally appears from the 2D wave equation and not from any pre-knowledge on the dimer model.) It is straightforward to check that~\eqref{eq:wave_polar} can be written as
\begin{align*}
\frac{\rho^2}{r^2}\Delta_{(\rho,\phi)}\psi^{(1)}\ =\ \frac{\rho^2}{r^2}\biggl[\frac{1}{\rho}\partial_\rho(\rho\partial_\rho)+\frac{1}{\rho^2}\partial_{\phi\phi}\biggr]\psi^{(1)}\ =\ 0.
\end{align*}
Thus, both functions~$z$ and~$\vartheta$ are harmonic in the variable~$\zeta$ and hence can be written in terms of their boundary values. Namely, we have
\begin{align*}
z(x,y)&=\operatorname{hm}_{\mathbb{D}}(\zeta(x,y);\gamma_E)+i\operatorname{hm}_{\mathbb{D}}(\zeta(x,y);\gamma_N)- \operatorname{hm}_{\mathbb{D}}(\zeta(x,y);\gamma_W)-i\operatorname{hm}_{\mathbb{D}}(\zeta(x,y);\gamma_S),\\[5pt]
\vartheta(x,y)&=\tfrac{\sqrt{2}}{2}\big(\operatorname{hm}_{\mathbb{D}}(\zeta(x,y);\gamma_E)-\operatorname{hm}_{\mathbb{D}}(\zeta(x,y);\gamma_N)+ \operatorname{hm}_{\mathbb{D}}(\zeta(x,y);\gamma_W)-\operatorname{hm}_{\mathbb{D}}(\zeta(x,y);\gamma_S)\big),
\end{align*}
where~$\gamma_E:=(e^{-i\frac{\pi}{4}},e^{i\frac{\pi}{4}})$, $\gamma_N:=(e^{i\frac{\pi}{4}},e^{i\frac{3\pi}{4}})$, $\gamma_W:=(e^{i\frac{3\pi}{4}},e^{i\frac{5\pi}{4}})$ and $\gamma_S:=(e^{i\frac{5\pi}{4}},e^{i\frac{7\pi}{4}})$ denote four boundary arcs of the unit disc~$\mathbb{D}$ {and $\operatorname{hm}_{\mathbb{D}}(\cdot;\gamma)$ denotes the harmonic measure of the boundary arc $\gamma$}. Recall that, if~$\alpha<\beta<\alpha+2\pi$, then
\begin{equation}
\label{eq:hm=}
\operatorname{hm}_\mathbb{D}(\zeta;(e^{i\alpha},e^{i\beta}))\ =\ \tfrac{1}{\pi}\big(\arg(e^{i\beta}-\zeta)-\arg(e^{i\alpha}-\zeta)\big)-\tfrac{1}{2\pi}(\beta-\alpha)\,.
\end{equation}

Note that~$(x,y)\mapsto \zeta(x,y)$ is an orientation preserving diffeomorphism from~$\mathbb{D}_\diamondsuit$ to~$\mathbb{D}$. From the explicit formula for~$z(x,y)$ and the argument principle, it is also clear that~$\zeta\mapsto z$ is an orientation preserving diffeomorphism from~$\mathbb{D}$ to~$\diamondsuit$. In particular, it makes sense to view~$\vartheta=\vartheta(x,y)$ as a function of~$z=z(x,y)\in\diamondsuit$. Since each of the origami maps~$\cO_n$ does not increase distances comparing to those in~$\cT_n$, the same should hold true in the limit as~$n\to\infty$. Therefore, the function~$z\mapsto \vartheta$ is~$1$-Lipschitz.

Following the approach developed in~\cite{CLR2,CLR1}, we now consider the surface~$(z(x,y),\vartheta(x,y))$ as embedded into the \emph{Minkowski} space~$\R^{2,1}$ rather than into the usual Euclidean space~$\R^3$. The Lipschitzness condition discussed above implies that this surface is space-like. We are now ready to formulate the most conceptual observation of this note. Let~$\mathrm{C}_\diamondsuit$ be a (non-planar) quadrilateral in~$\R^2\times \R\cong\C\times\R$ with vertices
\begin{equation}
\label{eq:C=}
\mathrm{C}_\diamondsuit:\ \ (1,\tfrac{\sqrt{2}}{2})\ \ \text{---}\ \ (i,-\tfrac{\sqrt{2}}{2})\ \ \text{---}\ \ (-1,\tfrac{\sqrt{2}}{2})\ \ \text{---}\ \ (-i,-\tfrac{\sqrt{2}}{2})\ \ \text{---}\ \ (1,\tfrac{\sqrt{2}}{2}).
\end{equation}
\begin{proposition} The surface~$\mathrm{S}_\diamondsuit:=\{(z(x,y),\vartheta(x,y)),\ (x,y)\in\mathbb{D}_\diamondsuit\}$ is a space-like maximal surface in the Minkowski space~$\R^{2,1}$, whose boundary is given by~$\mathrm{C}_\diamondsuit$; see~Fig.~\ref{fig:Aztec-Lorentz}. Moreover, the variable~$\zeta(x,y)\in\mathbb{D}$ provides the conformal parametrization of this maximal surface.
\end{proposition}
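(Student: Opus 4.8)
The plan is to realize $\zeta\in\mathbb{D}$ as a conformal (isothermal) parameter for $\mathrm{S}_\diamondsuit$ and then to read off minimality from the harmonicity of the coordinate functions. Write $X(\zeta)=(z_1(\zeta),z_2(\zeta),\vartheta(\zeta))$ with $z=z_1+iz_2$, and recall that $z_1,z_2,\vartheta$ were shown above to be harmonic in $\zeta$. Since $\vartheta$ is a $1$-Lipschitz function of $z$ and $\zeta\mapsto z$ is a diffeomorphism of $\mathbb{D}$ onto $\diamondsuit$, the set $\mathrm{S}_\diamondsuit$ is exactly the graph of $\vartheta$ over $\diamondsuit$; in particular it is a regularly embedded space-like surface and $\zeta$ is a regular parameter. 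Equip $\R^{2+1}$ with the Lorentzian form $\langle a,b\rangle=a_1b_1+a_2b_2-a_3b_3$ (so that the $\vartheta$-direction is time-like). The standard fact I will use is that a space-like immersion is parametrized isothermally precisely when $\langle\partial_\zeta X,\partial_\zeta X\rangle=0$, and that in such a gauge the mean-curvature vector is proportional to $\Delta X$; as $z_1,z_2,\vartheta$ are harmonic, conformality of $\zeta$ will therefore immediately give that $\mathrm{S}_\diamondsuit$ is Lorentz-minimal. Hence everything reduces to the single identity
\[
(\partial_\zeta z_1)^2+(\partial_\zeta z_2)^2-(\partial_\zeta\vartheta)^2=0 .
\]

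To check it I would differentiate the harmonic-measure representations $z_1=\operatorname{hm}_\mathbb{D}(\cdot;\gamma_E)-\operatorname{hm}_\mathbb{D}(\cdot;\gamma_W)$, $z_2=\operatorname{hm}_\mathbb{D}(\cdot;\gamma_N)-\operatorname{hm}_\mathbb{D}(\cdot;\gamma_S)$ and the analogous one for $\vartheta$ using \eqref{eq:hm=}. Since $\partial_\zeta\arg(c-\zeta)=\tfrac{i}{2}(c-\zeta)^{-1}$, each of $\partial_\zeta z_1,\partial_\zeta z_2,\partial_\zeta\vartheta$ is a rational function of $\zeta$ with simple poles at the four arc endpoints $c_1=e^{i\pi/4}$, $c_2=e^{3i\pi/4}$, $c_3=e^{5i\pi/4}$, $c_4=e^{7i\pi/4}$, i.e. at the fourth roots of $-1$. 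Writing $A_j:=(c_j-\zeta)^{-1}$, one finds that, up to the common scalar $\tfrac{i}{2\pi}$, the three derivatives equal $A_1+A_2-A_3-A_4$, $-A_1+A_2+A_3-A_4$ and $\sqrt{2}\,(A_1-A_2+A_3-A_4)$. Expanding and simplifying, the displayed identity then reduces to the purely algebraic statement $(A_1+A_3)(A_2+A_4)=2(A_1A_3+A_2A_4)$.

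The one real computational point is that this last identity holds exactly because of the symmetric placement of the four arcs. Using $c_3=-c_1$, $c_4=-c_2$ together with $c_1^2=i$, $c_2^2=-i$, one gets $A_1+A_3=2\zeta/(i-\zeta^2)$ and $A_2+A_4=2\zeta/(-i-\zeta^2)$, as well as $A_1A_3=(\zeta^2-i)^{-1}$ and $A_2A_4=(\zeta^2+i)^{-1}$; both sides of the identity then evaluate to $4\zeta^2/(\zeta^4+1)$. This establishes conformality of $\zeta$ and, together with harmonicity, that $\mathrm{S}_\diamondsuit$ is Lorentz-minimal with $\zeta$ as its conformal parameter. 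I do not expect a genuine obstacle here beyond the bookkeeping above; the only thing to watch is non-degeneracy of the induced metric, which is guaranteed by the strictly space-like graph property, so that ``conformal parametrization'' is meaningful.

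It remains to identify the boundary by examining $X$ as $\zeta\to\partial\mathbb{D}$, and this is the step that deserves the most care, because the four arcs and the four corners of $\partial\mathbb{D}$ play opposite roles. On each open arc the corresponding harmonic measure equals $1$ and the other three vanish, so $\gamma_E,\gamma_N,\gamma_W,\gamma_S$ collapse to the single points $(1,\tfrac{\sqrt2}{2})$, $(i,-\tfrac{\sqrt2}{2})$, $(-1,\tfrac{\sqrt2}{2})$, $(-i,-\tfrac{\sqrt2}{2})$, respectively. Near a common endpoint such as $c_1$ the far arcs contribute $0$, while the two adjacent harmonic measures satisfy $\operatorname{hm}_\mathbb{D}(\cdot;\gamma_E)+\operatorname{hm}_\mathbb{D}(\cdot;\gamma_N)\to1$; setting $t:=\operatorname{hm}_\mathbb{D}(\cdot;\gamma_E)$ and letting it range over $[0,1]$ along the various approaches to $c_1$ (as permitted by \eqref{eq:hm=}), the image traces the segment $t\mapsto(t+i(1-t),\tfrac{\sqrt2}{2}(2t-1))$ from $(1,\tfrac{\sqrt2}{2})$ to $(i,-\tfrac{\sqrt2}{2})$. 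Thus the four corners of $\partial\mathbb{D}$ open up to the four edges of the quadrilateral, and $\partial\mathrm{S}_\diamondsuit=\mathrm{C}_\diamondsuit$ as in \eqref{eq:C=}.
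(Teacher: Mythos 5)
Your proposal is correct and follows essentially the same route as the paper's proof: the same conformality identity \eqref{eq:conf} verified by differentiating the harmonic-measure formula \eqref{eq:hm=}, minimality deduced from the harmonicity of the coordinate functions in the isothermal gauge, and the same boundary analysis in which the four arcs collapse to the vertices of $\mathrm{C}_\diamondsuit$ while the four corner points of $\partial\mathbb{D}$ open up into its edges. Your explicit reduction of the rational identity to $(A_1+A_3)(A_2+A_4)=2(A_1A_3+A_2A_4)$ via the symmetries $c_3=-c_1$, $c_4=-c_2$ is a correct, slightly more detailed rendering of the computation the paper dismisses as ``straightforward to check.''
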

\begin{proof} From now onwards we view~$z$ and~$\vartheta$ as functions of the complex variable~$\zeta=\zeta(x,y)\in\mathbb{D}$ rather than as functions of~$(x,y)\in\mathbb{D}_\diamondsuit$. Let us first discuss the boundary behavior of the mapping~$\zeta\mapsto (z,\vartheta)$. From the explicit formulas, it is clear that~$(z(\zeta),\vartheta(\zeta))\to (1,\frac{\sqrt{2}}{2})$ if~$\zeta$ approaches the boundary arc~$\gamma_E$, and similarly for the three other arcs. Near the point~$\zeta=e^{i\frac{\pi}{4}}$ separating~$\gamma_E$ and~$\gamma_N$ one observes the following behavior:
\[
\begin{array}{rcl}
z(\zeta)&=&\frac{1+i}{2}+\frac{1-i}{\pi}\big(\arg(e^{i\frac{\pi}{4}}-\zeta)-\tfrac{\pi}{4}\big) +O(|\zeta-e^{i\frac{\pi}{4}}|),\\[5pt]
\vartheta(\zeta)&=&\tfrac{\sqrt{2}}{\pi}\big(\arg(e^{i\frac{\pi}{4}}-\zeta)-\tfrac{\pi}{4}\big)+O(|\zeta-e^{i\frac{\pi}{4}}|),
\end{array}
\]
where the branch of~$\arg$ is chosen so that~$\arg(e^{i\frac{\pi}{4}}-\zeta)\in (-\frac{\pi}{4},\frac{3\pi}{4})$ for~$\zeta\in\mathbb D$. This means that~$\vartheta$ and~$z$ are almost linearly dependent near the point~$e^{i\frac{\pi}{4}}$ and thus the boundary of the surface~$\mathrm{S}_\diamondsuit$ contains the full segment with endpoints~$(1,\tfrac{\sqrt{2}}{2})$ and~$(i,-\tfrac{\sqrt{2}}{2})$. Repeating the same argument for each of the four points separating the arcs~$\gamma_E$, $\gamma_N$, $\gamma_W$ and~$\gamma_S$, we conclude that the boundary of~$\mathrm{S}_\diamondsuit$ is given by~\eqref{eq:C=}.

\begin{figure}
\begin{minipage}[b]{0.45\textwidth}
\includegraphics[clip, trim=4.5cm 7.6cm 1cm 4cm, width=\textwidth]{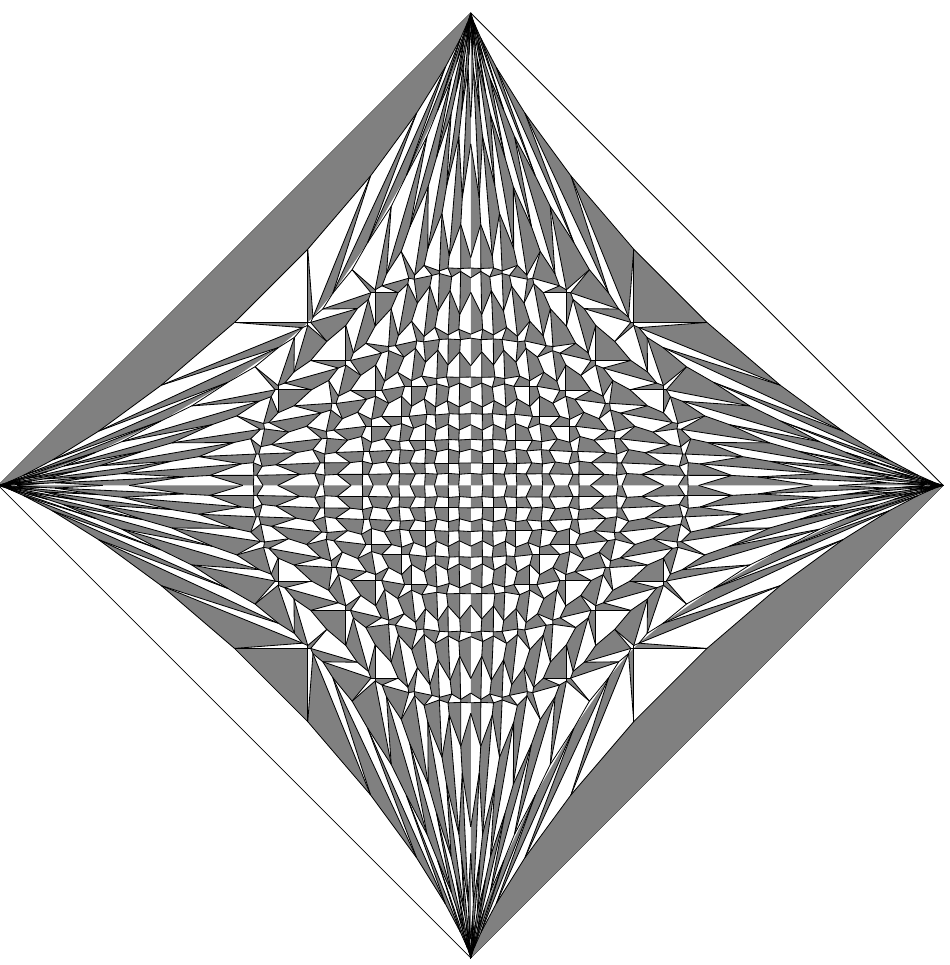}
\end{minipage}
\hskip 0.09\textwidth
\begin{minipage}[b]{0.45\textwidth}
\includegraphics[clip, trim=4.5cm 7.6cm 1cm 4cm, width=\textwidth]{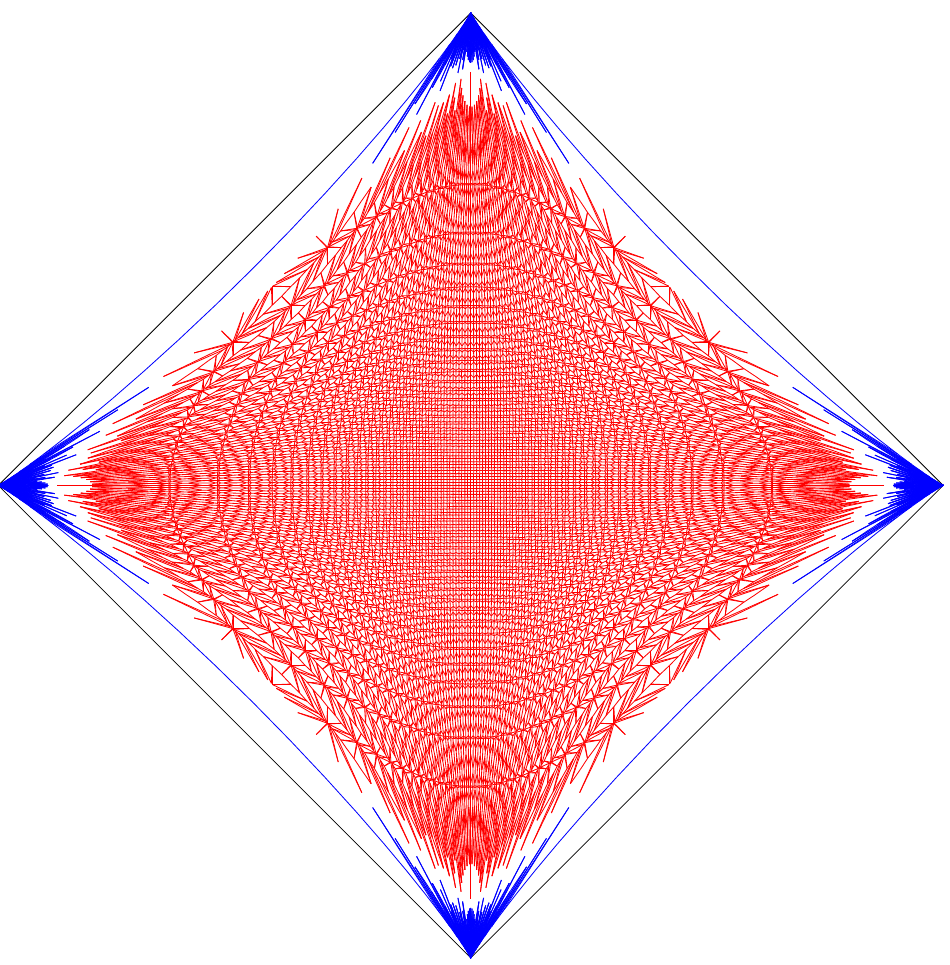}
\end{minipage}
\caption{\textsc{Left:} (symmetric) t-embedding of the Aztec diamond of size~$27$. \textsc{Right:}~\mbox{t-embedding} of the Aztec diamond of size~$102$, the edges connecting~$\cT_{101}(j_1,k_1)$ and~$\cT_{101}(j_2,k_2)$ are colored \emph{red} if both~$(j_{1,2}^2+k_{1,2}^2)\le 0.49\cdot 10^4$, \emph{blue} if both~$(j_{1,2}^2+k_{1,2}^2)\ge 0.5\cdot 10^4$, and are not shown otherwise.\label{fig:A25bw-A100}}
\end{figure}
\begin{figure}

\begin{minipage}[c]{0.32\textwidth}
\includegraphics[clip, trim=4.5cm 7.6cm 1.2cm 4cm, width=\textwidth]{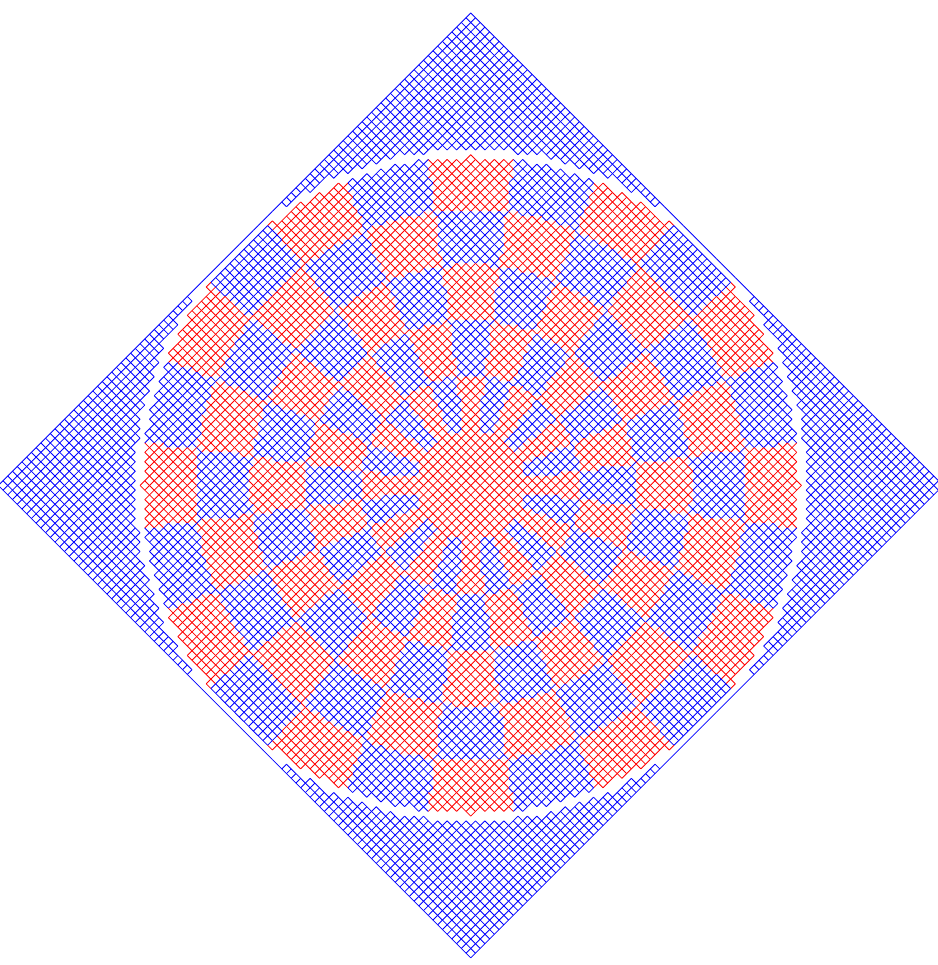}

\includegraphics[clip, trim=4.5cm 7.6cm 1.2cm 4cm, width=\textwidth]{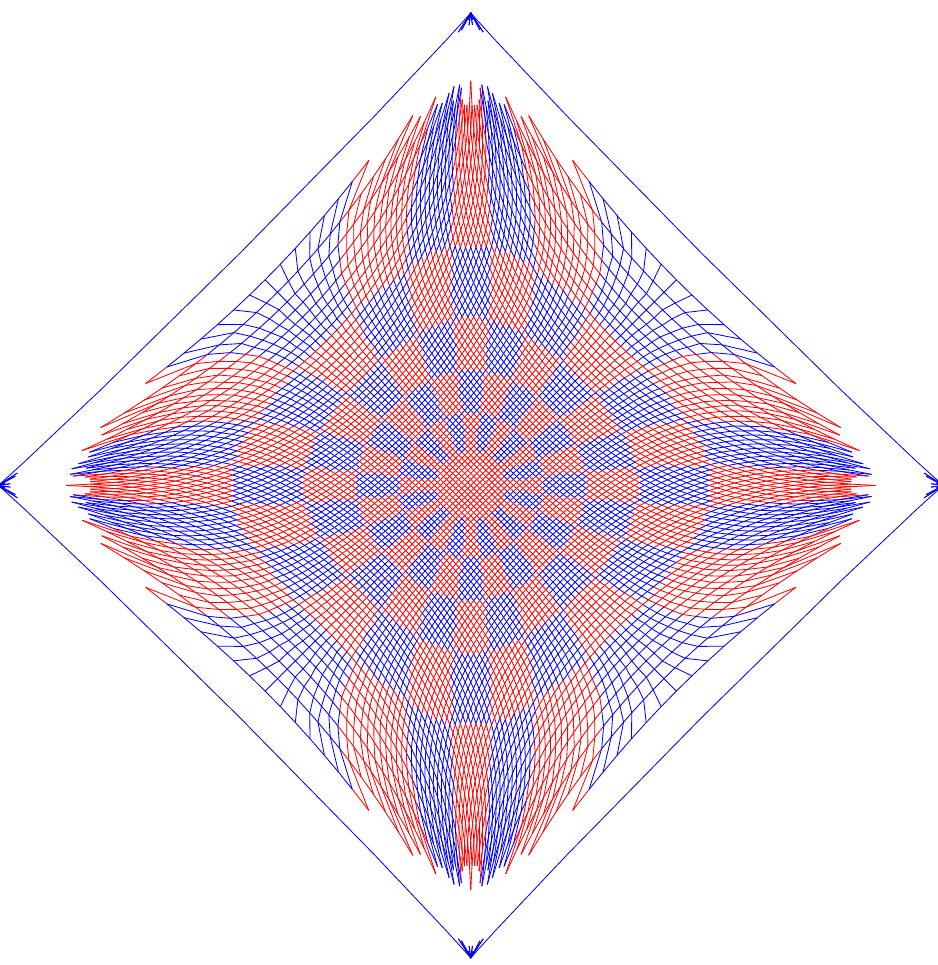}
\end{minipage}
\hskip 0.07\textwidth
\begin{minipage}[c]{0.56\textwidth}
\includegraphics[clip, trim=6.4cm 16cm 5.6cm 4.7cm, width=\textwidth]{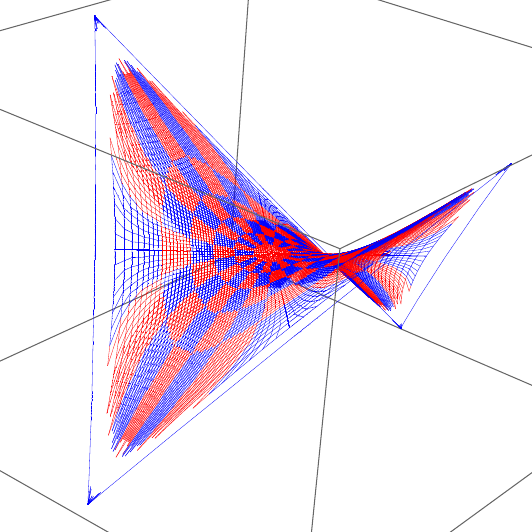}
\end{minipage}
\caption{\textsc{Top-left:} the grid of points~$(\frac{1}{1600}j,\frac{1}{1600}k)$,~$j,k\in16\mathbb{Z}$, \mbox{$j+k\in 32\mathbb{Z}$}. The additional red-blue coloring of the liquid region is introduced for \mbox{visibility}. \textsc{Bottom-left:} The image of this grid under the mapping~$\cT_{1601}$. Four frozen zones of~$A_{1602}$ are collapsed to tiny vicinities of~the~corners; the image of a thin white vicinity of the arctic circle is stretched in a square-root fashion.
\textsc{Right:}~The grid of points~$(\cT_{1601}(j,k),\cO'_{1601}(j,k))$,~$j,k\in16\mathbb{Z}$, \mbox{$j+k\in 32\mathbb{Z}$}, \mbox{approximates} the space-like maximal surface~$\mathrm{S}_\diamondsuit$ spanning the contour~$\mathrm{C}_\diamondsuit\subset{\R^{2,1}}$.\label{fig:Aztec-Lorentz}}
\end{figure}

We now prove that~$\zeta$ is a \emph{conformal} parametrization of~$\mathrm{S}_\diamondsuit$; recall that the angles on the surface~$\mathrm{S}_\diamondsuit$ are measured with respect to the Minkowski metric in~$\R^{2,1}$. To this end, we need to show that
\begin{equation}\label{eq:conf}
(\partial_{\zeta}z)(\zeta)\cdot(\partial_{\zeta} \overline{z})(\zeta)\ =\ ((\partial_{\zeta}\vartheta)(\zeta))^2\quad \text{for}\ \ \zeta\in\mathbb{D},
\end{equation}
where~$\partial_{\zeta}=\frac{1}{2}(\partial_{\mathrm{Re}\zeta}-i\partial_{\mathrm{Im}\zeta})$ stands for the Wirtinger derivative with respect to the variable~$\zeta$. Recall that the functions~$z(\zeta)$ and~$\vartheta(\zeta)$ are explicit linear combinations of the harmonic measures of the arcs~$\gamma_E$, $\gamma_N$, $\gamma_W$, $\gamma_S$. From~\eqref{eq:hm=}, one easily sees that
\[
\partial_{\zeta}\operatorname{hm}_\mathbb{D}(\zeta;(e^{i\alpha},e^{i\beta}))\ =\ \tfrac{i}{2\pi}\big((\zeta-e^{i\beta})^{-1}-(\zeta-e^{i\alpha})^{-1}\big)\,.
\]
Therefore, the equation~\eqref{eq:conf} is equivalent to the identity
\begin{align*}
\big((1-i)(\zeta-e^{i\frac{\pi}{4}})^{-1}+(1+i)(\zeta-e^{i\frac{3\pi}{4}})^{-1}+ (-1+i)(\zeta-e^{i\frac{5\pi}{4}})^{-1}+(-1-i)(\zeta-e^{i\frac{7\pi}{4}})^{-1}\big)&\\
\times \big((1+i)(\zeta-e^{i\frac{\pi}{4}})^{-1}+(1-i)(\zeta-e^{i\frac{3\pi}{4}})^{-1}+ (-1-i)(\zeta-e^{i\frac{5\pi}{4}})^{-1}+(-1+i)(\zeta-e^{i\frac{7\pi}{4}})^{-1}\big)&\\
 =\ 2\big((\zeta-e^{i\frac{\pi}{4}})^{-1}-(\zeta-e^{i\frac{3\pi}{4}})^{-1}+ (\zeta-e^{i\frac{5\pi}{4}})^{-1}-(\zeta-e^{i\frac{7\pi}{4}})^{-1}\big)^2&\,,
\end{align*}
which is straightforward to check.

Finally, both~$z$ and~$\vartheta$ are \emph{harmonic} functions in the conformal parametrization~$\zeta$ of the space-like surface~$\mathrm{S}_\diamondsuit$ (provided that the metric on~$\mathrm{S}_\diamondsuit$ is induced from~$\R^{2,1}$). Classically, this implies that~$\mathrm{S}_\diamondsuit$ is maximal in the metric of~$\R^{2,1}$; see \cite{kobayashi}.
\end{proof}

\section{Numerical simulations} The discrete wave equation~\eqref{eq:discretewave} trivially admits very fast simulations, a few examples are given in Fig.~\ref{fig:A25bw-A100} and Fig.~\ref{fig:Aztec-Lorentz}. Actually, constructing solutions to~\eqref{eq:discretewave} is more memory-consuming than time-consuming if one wants to keep all binary digits so as not to lose control over cancellations, inherent to the wave equation in~2D. The pictures in Fig.~\ref{fig:A25bw-A100} and, notably, in Fig.~\ref{fig:Aztec-Lorentz} are obtained by such \emph{exact} simulations.

Despite some clear resemblance, we stress that the space-like maximal surface is not the same as the limiting height function, also known as the limit shape.

\section{Conclusion} In this note we studied the symmetric t-embeddings~$\cT_n$ of homogeneous Aztec diamonds~$A_n$ and tested the framework developed in~\cite{CLR2,CLR1} on this classical example of the dimer model that leads to a non-trivial conformal structure of the fluctuations. Both analytic arguments and numerical simulations strongly indicate the convergence of the graphs~$(\cT_n,\cO_n)$ of the corresponding origami maps~$\cO_n$ to a space-like maximal surface~$\mathrm{S}_\diamondsuit$ embedded into the Minkowski space~$\R^{2,1}$. The intrinsic conformal structure of~$\mathrm{S}_\diamondsuit$ provides a new description of the well-studied scaling limit of dimer fluctuations in the liquid regions of~$A_n$. Though additional work is required to get a new rigorous proof of the convergence of fluctuations (see Remark~\ref{rem:missing}), our results strongly support the paradigm of~\cite{CLR2,CLR1}. See also the very recent paper~\cite{BNR}, which provides the missing ingredients and thus completes such a proof.

\subsection*{Acknowledgements} This research was partially supported by the ANR-18-CE40-0033 project DIMERS. We would like to thank Beno\^{\i}t Laslier and Marianna Russkikh for many discussions and comments on draft versions of this note, and Jesper Lykke Jacobsen, Peter McGrath, Gregg Musiker and Istv\'an Prause for pointing out relevant references. D.C. is grateful to Olivier Biquard for {a very helpful advice on the Lorentz geometry.} S.R. thanks the Fondation Sciences Math\'ematiques de Paris for the support during the academic year 2018/19.

\Addresses
\end{document}